\newtheorem{theorem}{Theorem}
\newtheorem{remark}{Remark}
\renewcommand{\algorithmicrequire}{\textbf{Input:}} 
\renewcommand{\algorithmicensure}{\textbf{Output:}}
\DeclareMathOperator*{\argmin}{\arg\min}
\newcommand{\bdphi}{\boldsymbol{\phi}} 
\newcommand{\bdpsi}{\boldsymbol{\psi}}
\begin{document}
\title{Alternating Maximization Algorithm for Mismatch Capacity with Oblivious Relaying
\thanks{The first three authors contributed equally to this work and $\dag$ marked the corresponding author. This work was partially supported by National Key Research and Development Program of China (2018YFA0701603) and National Natural Science Foundation of China (12271289 and 62231022).}
} 

\author[1]{Xinwei Li}
\author[1]{Lingyi Chen}
\author[1]{Shitong Wu}

\author[2$\dag$]{Huihui Wu}
\author[1]{Hao Wu}
\author[3]{Wenyi Zhang}

% \author[1]{ZZZ}
% \author[2]{XXX}
% \author[3]{YYY}
\affil[1]{Department of Mathematical Sciences, Tsinghua University, Beijing 100084, China}
% RDA Technologies Ltd, Shatin, Hong Kong SAR, China} % Theory Lab, Central Research Institute, 2012 Labs, Huawei Tech. Co. Ltd., Hong Kong SAR, China}
\affil[3]{Department of Electronic Engineering and Information Science, 
\authorcr University of Science and Technology of China, Hefei, Anhui 230027, China 
% \authorcr Email: wenyizha@ustc.edu.cn 
}
\affil[2]{Yangtze Delta Region Institute (Huzhou), 
\authorcr University of Electronic Science and Technology of China, 
Huzhou, Zhejiang, 313000, China. 
\authorcr Email: huihui.wu@ieee.org
}

\maketitle

% ------------------------------------------------------------

% 1.信道很重要 2. 计算困难 3. lagrangian BA
\begin{abstract}
Reliable communication over a discrete memoryless channel with the help of a relay has aroused interest due to its widespread applications in practical scenarios. 
By considering the system with a mismatched decoder, previous works have provided optimization models to evaluate the mismatch capacity in these scenarios. 
The proposed models, however, are difficult due to the complicated structure of the mismatched decoding problem with the information flows in hops given by the relay. 
Existing methods, such as the grid search, become impractical as they involve finding all roots of a nonlinear system, with the growing size of the alphabet. 
To address this problem, we reformulate the max-min optimization model as a consistent maximization form, by considering the dual form of the inner minimization problem and the Lagrangian with a fixed multiplier. 
Based on the proposed formulation, an alternating maximization framework is designed, which provides the closed-form solution with simple iterations in each step by introducing a suitable variable transformation. 
The effectiveness of the proposed approach is demonstrated by the simulations over practical scenarios, including Quaternary and Gaussian channels. 
Moreover, the simulation results of the transitional probability also shed light on the promising application attribute to the quantizer design in the relay node.
\end{abstract}
\begin{IEEEkeywords}
Mismatch capacity, oblivious relay, information bottleneck, LM rate, alternating maximization. 
\end{IEEEkeywords}
% ------------------------------------------------------------

\section{Introduction}
%
% relay 信道的重要性
% relay 信道是什么，有一个IB类似的形式
% relay 需要引入 mismatch (1，之前没有引入或者说之前是什么样子？；2，需要被引入)
% 介绍 mismatch 的结果，LM和GMI
% 尽管 mismatch 考虑重要，同时也有一些衡量的函数，但是直接计算是困难的。解释原因，...

% In this paper, 为了解决这些问题，我们采用了一套类似IB问题求解的方案，同时兼顾了LM的对偶形式。具体来讲，怎么放缩的（详细描述一下），考虑到LM 的对偶形式，这个问题进一步被转换成最大问题，是得可以使用交替优化求解这个问题的形式。更好的是，每一步迭代都具有闭式解，保证了算法的effecievcy and robust of the iteration. 
% 数值实验表明，我们的算法不仅是高效的，同时也是准确的。

% 
Reliable communication over a discrete memoryless channel (DMC) with a relay has aroused considerable interest \cite{Meulen1971THREETERMINALCC}, due to its widespread applications in practical scenarios \cite{Quek_Peng_Simeone_Yu_2017, Wang2020AnOO, Guido2011Quantize}. 
In communications with relaying, a transmitter sends a message to one or multiple oblivious agencies, and the message is then conveyed to the receiver over links with restricted capacities \cite{Dikshtein2023OnMO, Hucher2010MismatchedDF}. 
The uplink Cloud Radio Access Network (C-RAN) is a typical application of this model and extensive investigations on the capacity of this scenario have been concerning the multi-user communications with base stations \cite{EstellaAguerri2019OnTC, Chen2016AlternatingIB, Xu2021InformationBF}. 
The capacity of the channels with oblivious relaying are commonly modeled by the Information Bottleneck (IB) framework \cite{Tishby2000TheIB}, due to the similarity between the constraints of two optimization problems \cite{Dikshtein2023OnMO}. 
Specifically, the compression quality under the IB approach is evaluated by the mutual information between the compressed representation and the bottleneck variable, which inherently matches 
the channel capacity with oblivious relaying. 
In this work, we focus on channels with oblivious relaying, in the case of mismatched decoding \cite{Hucher2010MismatchedDF, Dikshtein2023OnMO}. 
Recent work \cite{Dikshtein2023OnMO} derives a random coding capacity of the discrete memoryless information bottleneck channel (DM-IBC) with a mismatched decoder and the corresponding achievable bound of random codes is also analyzed. 
Specifically, the theoretical capacity of such a channel is proved to be objected by the LM (Lower [bound on the] Mismatch [capacity]) rate \cite{csiszar1981graph,Scarlett2020Information}, which is the tightest achievable rate via a constant-composition codebook ensemble \cite{Merhav1994Mismatched,Csiszar1995Channel,csiszar2011information}, and constrained by the mutual information with upper bound similar to the IB problem. 
The optimization models of the mismatch capacity with oblivious relaying are complicated due to the utilization of the LM rate as the objective function in the IB problem, instead of the mutual information. 
This implies that algorithms for the IB problem \cite{Tishby2000TheIB, blahut_1972_computation, Chen23IBOT, chen2023srib}, cannot be commonly employed.
Moreover, the numerical methods for the LM rate \cite{ye2022optimal,Scarlett2020Information} only provide a minimization solution of the inner problem for this optimization models, which will lead to inefficiency when directly applied. 
To solve this problem, the authors of \cite{Dikshtein2023OnMO} proposed an algorithm named MMIB, resembling an adaptive grid search, which requires finding all roots of a nonlinear system constrained by the mutual information. 
As the size of the input alphabet increases, this method will become intractable.  
To address this problem, this paper develops an alternating maximization algorithm to efficiently compute the mismatch capacity with oblivious relaying. 
Specifically, to handle the optimization constraint, an IB-Lagrangian model with a fixed multiplier is proposed, by considering the Lagrange relaxation of the original problem. 
Additionally, we explore a dual formulation of the LM rate \cite{ye2022optimal} from the optimal transport perspective, which transforms the optimization into a consistent maximization problem. 
To ensure an analytic form in each step of the algorithm, a variable transformation is further introduced. 
It should be noted that the fixed multiplier plays a role in the convergence of the algorithm, as will be discussed subsequently.
Finally, the numerical experiments validate the efficiency and accuracy of the proposed algorithm. 
The results of the optimized probability distributions shed light on the promising application to the quantizer in practical scenarios.

\section{Problem Formulation} \label{sec_problem}
We consider a 3-node point-to-point communication system with a relay depicted in Fig. \ref{fig:relay}, referred to as the discrete memoryless information bottleneck channel (DM-IBC) \cite{Dikshtein2023OnMO}.  
\begin{figure}[H]
	\centering
		\includegraphics[width=0.95\linewidth]{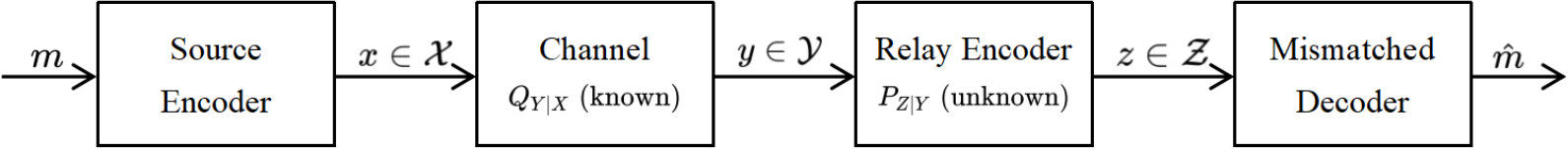}
        \caption{A 3-node channel with oblivious relaying and a mismatched decoder}
	\label{fig:relay}
\end{figure}
Under its model presuppositions, we denote the input alphabet $\mathcal{X} = \{x_i\}_{i=1}^M$, the received alphabet $\mathcal{Y} = \{y_k\}_{k=1}^K$, and the reproduced alphabet $\mathcal{Z} = \{z_j\}_{j=1}^N$. 
The conditional probability distribution between $\mathcal{X}$ and $\mathcal{Y}$ is denoted by $Q_{Y|X}(y|x)$, which is only known by the encoder. 
With an input distribution $P_X \in \mathcal{P(X)}$, the output distribution at the relay node $P_Y \in \mathcal{P(Y)}$ and the joint distribution $P_{XY} \in \mathcal{P(X \times Y)}$ are derived from the transition probability $Q_{Y|X}$. 
After compression and transmission though an unknown relay encoder ${P}_{Z|Y}(z|y)$, the joint distribution ${P}_{XZ} \in \mathcal{P}(\mathcal{X} \times \mathcal{Z})$ is given by 
\vspace{.01in}
\begin{equation} \label{Pxz}
    P_{XZ}(x,z) = \sum_{y\in\mathcal{Y}} P_{Z|Y}(z|y)Q_{Y|X}(y|x)P_X(x).
\end{equation}
%

% rate and mismatched decoder
Given a transmission rate $R$ and a compression rate $B$,  the encoding scheme at the sender is described by a codebook of $2^{nR}$ $n$-dimensional sequence $x^{n}(m)\in\mathcal{X}^n$, 
while the encoding scheme at the relay source encoder is described by a codebook of $2^{nB}$ $n$-dimensional sequence $z^{n}(w)\in\mathcal{Z}^n$, 
When a message $m$ is chosen uniformly from the message set $\mathcal{M}=\{1, 2, \cdots, 2^{nR}\}$, the sender transmits the corresponding codeword $x^{n}(m)$ to the relay source encoder.
The source encoder at the relay receives $y^n\in\mathcal{Y}^n$, and then compresses it into $z^{n}(w)$, where $w \in \mathcal{W}=\{1, 2, \cdots, 2^{nB}\}$ 
is a representation set.
The mismatched decoder then provides an estimate $\hat{m}\in\mathcal{M}$ of the intended transmission message based on the reproduced sequence $z^{n}(w)$
and the following prescribed decoding rule
% \vspace{.01in}
\begin{equation*}
     \hat{m} = \argmin_{j\in\mathcal{M}}\sum_{i=1}^{n}d(x_{i}(j),z_{i}),
\end{equation*}
where $d: \mathcal{X} \times \mathcal{Z}\to\mathbb{R}$ is called the mismatched metric \cite{Merhav1994Mismatched}. 
With these notations, the random coding capacity of the DM-IBC with a mismatched decoder is defined as \cite{Dikshtein2023OnMO}, 
\vspace{.05in}
\begin{equation} \label{capacity_def}
    \begin{aligned}
        C_d(B) \triangleq \max_{P_{X}}
        \max_{{P}_{Z|Y}}\quad& I_{\mathrm{LM}}(X;Z) \\
        \mathrm{s.t. }\quad & I(Y; Z) \leq B.
    \end{aligned}
\end{equation}
%
% \vspace{-.1in}
Here $I_{\mathrm{LM}}(X;Z)$ is the LM rate defined via $P_{XZ}$ \eqref{Pxz}, and is 
\newpage
\noindent
actually an optimization problem
\addtolength{\topmargin}{.02in}
\begin{equation} \label{rate_def}
I_{\mathrm{LM}}(P_X, P_{Z|X}) =\min_{\substack{\widetilde{P}_{X Z} \in \mathcal{P}(\mathcal{X} \times \mathcal{Z}): \widetilde{P}_X=P_X, \widetilde{P}_Z=P_Z \\ \mathbb{E}_{\widetilde{P}_{XZ}}[d(X, Z)] \leq \mathbb{E}_{P_{XZ}}[d(X, Z)]}} I_{\widetilde{P}}(X ; Z).
\end{equation}
Here $I_{LM}(P_X, P_{Z|X})$ refers to the LM rate $I_{\mathrm{LM}}(X;Z)$ since its value is determined by distributions $P_X$ and $P_{Z|X}$.
The notation $\widetilde{P}$ is the auxiliary probability distribution corresponding to the variables to be optimized in \eqref{rate_def} and the index $d$ in $C_d(B)$ corresponds to the decoding metric $d(x,z)$. 
It is noted that the optimization \eqref{capacity_def} presents the similar structure as the IB problem, with substituting the objective function with the LM rate, 
which imposes a considerable challenge on solving this optimization problem. 
The LM rate \eqref{rate_def} here is different from that in the usual sense, primarily because the distribution $P_{XZ}$ has to be treated as a variable and is also linked via $P_X$ and ${P}_{Z|Y}$ using \eqref{Pxz}. 
Moreover, the problem \eqref{capacity_def} possesses a max-min structure and it is non-convex.
Thus, the solution to problem \eqref{capacity_def} is intricate. 
The MMIB algorithm was proposed in \cite{Dikshtein2023OnMO} for solving the problem \eqref{capacity_def}, including a step which entailed finding all roots of a system of nonlinear equations.
Finding all roots in this system is almost impossible in large-scale cases, and even solving for part of the roots selectively still results in large computational costs. 
It is therefore apparent that a more efficient algorithm for this problem is required.

\section{Alternating Maximization Algorithm} 
\label{sec_main}
In this section, an efficient algorithm will be proposed for solving \eqref{capacity_def}, with a prior reformulation processing of the model.
We simplify the original formulation \eqref{capacity_def} via the IB-Lagrangian type relaxation and the dual form of the LM rate from the optimal transport perspective, which enables us to solve it in an alternating ascend manner. 

\vspace{-.04in}
\subsection{Relaxation and Dual Form}
The compression rate constraint in equation \eqref{capacity_def} involves an unknown variable, $P_{Z|Y}$, which presents a challenge in terms of the direct computation. 
To address this issue, we utilize a technique similar to the IB-Lagrangian in the BA algorithm \cite{Tishby2000TheIB}, by introducing a fixed Lagrange multiplier $\lambda$ corresponding to slope of the curve to evaluate the trade-off between the LM rate and the compression rate $B$, which yields a relaxation of the original problem 
\begin{equation}    \label{capacity_relax}
\max_{P_{X}}\max_{{P}_{Z|Y}} 
\left( I_{\mathrm{LM}}(P_X, P_{Z|X}) - \lambda  I(Y; Z) \right).
\end{equation}
\vspace{-.11in}

Note that the solution of problem \eqref{capacity_relax} is also not straightforward, since it includes the term $I_{\mathrm{LM}}(P_X, P_{Z|X})$ in \eqref{rate_def}. 
In the general case, the objective function of an IB problem is the mutual information with an explicit expression.
However, the LM rate here is an optimization, without an analytic form.

Consequently, we introduce an optimal transport based dual form of the LM rate for fixed $P_{XZ}$ \cite{chen2024double}, i.e.,
\begin{equation*}
    I_{\mathrm{LM}}(P_X, P_{Z|X}) = \max_{\substack{\bm\phi, \bm\psi, \\ \zeta \geq 0}} ~ g_{\mathrm{LM}}(\bm\phi,\bm\psi,\zeta).
\end{equation*}
And the above optimization exhibits a consistent maximization 
\newpage
\noindent
formulation with the objective function 
\begin{equation*}\label{lm_dual}
\begin{aligned}
    g_{\mathrm{LM}} (\bm\phi,\bm\psi,\zeta)
    =
    & -\bm\phi^{T} \bm\Lambda \bm\psi + H(X) + H(Z) + \mathbb{E}_{P_{XZ}}(\log \bm\Lambda) \\
    &  
    + \mathbb{E}_{P_X}(\log \bm\phi) +  \mathbb{E}_{{P}_{Z}}(\log \bm\psi) + 1,
\end{aligned}
\end{equation*}
where $\bm\Lambda = e^{-\zeta d(X, Z)}$, and $H(\cdot)$ denotes the information entropy.
Since $g_{\mathrm{LM}} (\bm\phi,\bm\psi,\zeta)$ is affected by the unknown distribution $P_X$ and $P_{Z|Y}$, we denote it as $g_{\mathrm{LM}} (\bm\phi,\bm\psi,\zeta; P_X, P_{Z|Y})$ in the following.
Accordingly, \eqref{capacity_relax} is transformed to
\begin{equation} \label{capacity_use}
    \max_{\substack{P_X, \\ P_{Z|Y}}}~\max_{\substack{\bm\phi, \bm\psi, \\ \zeta \geq 0}}~g_{\mathrm{LM}}(\bm\phi,\bm\psi,\zeta; P_X, P_{Z|Y}) - \lambda~ \mathbb{E}_{P_{YZ}}\left(\log \frac{P_{Z|Y}}{P_Z}\right),
\end{equation}
where $P_{YZ}(y, z)=P_{Z|Y}(z|y)P_{Y}(y)$ 
is also unknown. % derived by unknown variables.

\subsection{Alternating Maximization Algorithm}
\label{sec:alg}
Prior to this, we introduce some annotations,
\vspace{-.05in}
\begin{alignat*}{3}
    p_i &= {P}_X(x_i), ~  q_{k}\! &={P}_Y(y_k),\,\,\, ~ \Theta_{ik} & = Q_{Y|X}(y_k | x_i), \\
    r_j &= {P}_{Z}(z_j), ~ D_{ij} &= d(x_i, z_j), ~ \Omega_{kj} &= {P}_{Z|Y}(z_j|y_k).
\end{alignat*}
%%加粗 向量 矩阵
The bold lowercase letter, e.g., $\bm r$, denotes the corresponding vector of the probability distribution, while the bold uppercase letter, e.g., $\bm \Omega$, denotes the corresponding metric. 
% 点乘 点除 矩阵点除
The notion $\odot$ represents the pointwise multiplication. 
The notion $\bm q \,\, ./ \,\, \bm p$ represents the division of each element of $\bm q$ by the corresponding element of $\bm p$. 
Following the reformulation of the consistent maximization problem \eqref{capacity_use}, an algorithm for solving it by an alternating ascending approach can be given.
Furthermore, it is highlighted that a closed-form solution can still be obtained at each iteration step, despite the highly complex nature of the problem, which will be addressed in the subsequent discussion.
The objective function in \eqref{capacity_use} contains the entropy term $H(Z) = - (\bm{\Omega}^T \bm{\Theta}^T \bm p)^T \log (\bm{\Omega}^T \bm{\Theta}^T \bm p)$, making it difficult to obtain a closed-form solution for updating $\bm p$.
Thus, we utilize a variable transformation $\bm{\widetilde{\psi}} = \bm{\psi} ./ (\bm{\Omega}^T \bm{\Theta}^T \bm p)$.
Then the relaxation model \eqref{capacity_use} is equivalent to the maximization problem
\begin{equation}\label{capacity_dis}
\begin{aligned}
    \max_{\bm p, \bm t, \bm r}\,\,\max_{\substack{ \bdphi, \widetilde{\bdpsi}, \\ \zeta \geq 0}}& \,\,  - \bm p^T \log \bm p + \bm p^T \log \bm J(\bdphi, \widetilde{\bdpsi}, \zeta; \bm \Omega, \bm r) + 1 \\
    \mathrm{s.t.} &\,\,\,  \bm \Omega \mathbf{1}_{N} = \mathbf{1}_{K},\,\,
    \|\bm r\|_1 = 1,\,  \|\bm p\|_1 = 1.
\end{aligned}
\end{equation}
The function $\bm J(\bdphi, \widetilde{\bdpsi}, \zeta; \bm \Omega, \bm r)$ is defined as
\begin{equation*}
\begin{aligned}
    & \bm J =  \bm \phi \odot \exp \left\{
    (\bm \Theta \bm \Omega)\left[ -\left(\bm \Lambda ^T \bm \phi\right)\odot\widetilde{\bm \psi} 
    +\log \widetilde{\bm\psi}\right]
    \right. \\
    &\left.
    - \zeta \left[(\bm \Theta \bm \Omega)\odot \bm D\right]\mathbf{1}_N
    - \lambda \bm \Theta\left[\bm \Omega \odot \left(\log \bm \Omega - \log \left(\mathbf{1}_K \bm r ^T \right)\right) \right]
    \right\},
\end{aligned}
\end{equation*}
which denotes the sum of coefficients in linear terms related to the variable $\bm p$.

%%%%%算法推导%%%%

The main derivation of the algorithm is presented as follows.

\subsubsection{Update the input distribution $P_X$}
Consider the optimization \eqref{capacity_dis} with respect to $\bm p$ only,
and construct its Lagrangian with the dual variable $\eta_p\in\mathbb{R}$,
\vspace{-.05in}
\begin{equation*}    \label{lag_p}
    \begin{aligned}
        \mathcal{L}_p  (\bm p; \eta_p) = &
        H(\bm p) + \mathbb{E}_{P_X}\left(\log \bm{J}(\bm\phi, \widetilde{\bm\psi}, \zeta; \bm \Omega, \bm r) \right) + 1 \\
        & - \eta_p\left( \|\bm p\|_1 - 1 \right).
    \end{aligned}
\end{equation*}
By considering the first-order condition and further analyzing 
\newpage
\noindent
its the dual form, we can update variables $p_i$ according to
$p_i = {J_i}/{\sum_{i^{'} = 1}^M J_{i^{'}}}$.
% where $J_i$ is the corresponding element of the function $\bm{J}(\bm\phi, \widetilde{\bm\psi}, \zeta; \bm \Omega, \bm r)$.
%

\subsubsection{Update the joint distribution between $Y$ and $Z$}
Considering the first-order condition with respect to $\Omega_{kj}$ $ = {P}_{Z|Y}(z_j|y_k)$, we could obtain an optimal solution.
In conjunction with the constraints in \eqref{capacity_dis}, variables $\bm \Omega$ can be updated through the normalized condition corresponding to the conditional probability distribution, that is
$
    \Omega_{kj} = {\Omega_{kj}^*}/{\left(\sum_{j^{'}=1}^N \Omega_{kj^{'}}^*\right)},
$
where % the metric $\bm \Omega^* = (\Omega_{kj}^*)_{K\times N}$ with

\vspace{-.1in}
\begin{small}
\begin{equation*}
\begin{aligned}
    \bm \Omega^* = 
    & \left\{
    \mathbf{1}_K \left[\bm r 
    \odot (\widetilde{\bm \psi}^{\frac{1}{\lambda}})
    \odot \exp \left( \frac{1}{\lambda} (- \bm \Lambda ^T \bm \phi)\odot\widetilde{\bm \psi} \right) 
    \right]^T\right\}\\
    & \odot 
    \exp \left[-\frac{\zeta}{\lambda} (\bm \Theta^T \operatorname{Diag}(\bm p)\bm D)./(\bm \Theta^T \bm p \mathbf{1}_N^T)\right].
\end{aligned}
\end{equation*}
\end{small}

In a similar manner, variables $r_j = P_Z(z_j)$ can be updated according to $\bm r = \bm \Omega ^T \bm \Theta ^T \bm p$.

\subsubsection{Update the dual variables of the LM rate}For the fixed input distribution $p_i = P_X(x_i)$ and joint distribution $\Omega_{kj} = \widetilde{P}_{Z|Y}(z_j|y_k)$, this process is equivalent to computing the LM rate between $X$ and $Z$ with $P_{XZ}$ derived according to \eqref{Pxz}.
Similar to the previous work \cite{ye2022optimal}, we update the dual variables alternatively by
\begin{equation*}
    \phi_i  = \dfrac{p_i}
    {\sum_{j=1}^N e^{-\zeta D_{ij}}\widetilde{\psi_j} r_j },\quad % \label{upd_phi}\\
    \widetilde{\psi_j}  = \dfrac{1}
    {\sum_{i=1}^M \phi_{i} e^{-\zeta D_{ij}}}.
    % \label{upd_psi}
\end{equation*}

Also, $\zeta$ is updated by finding the root of the following monotone function

\vspace{-.1in}
\begin{small}
\begin{equation*}\label{upd_ze}
\begin{aligned}
    G(\zeta) \triangleq 
    \bm \phi^T \left( \bm D \odot \bm \Lambda \right) \left[\widetilde{\bm \psi} \odot \left(\bm \Omega ^T \bm \Theta ^T \bm p \right)\right] 
    -
    \bm p^T \left[  \left(\bm \Theta  \bm \Omega \right)
    \odot\bm D \right]\mathbf{1}_N,
\end{aligned}
\end{equation*}
\end{small}
where $\bm\Lambda = e^{-\zeta\bm D}$ as defined before.
At this point, the primary derivation of the alternating maximization process has been described. 
We summarise the proposed algorithm in the pseudo-code in Algorithm \ref{alg:main}.
Due to space limitations, we put the derivation details of the AM algorithm in the appendix.  

\vspace{-.1 in}
\begin{algorithm}[htbp] 
	\renewcommand{\algorithmicrequire}{\textbf{Input:}}
	\renewcommand\algorithmicensure {\textbf{Output:} }
	\caption{Alternating Maximization (AM) Algorithm}
	\label{alg:main}
	
	\begin{algorithmic}[1]
		\REQUIRE %~~
		Conditional probability distribution $\bm \Theta$, Decoding metric $\bm D$, Lagrange multiplier $\lambda$, Iteration number $max\_iter$. \\
         
        \STATE \textbf{Initialize:} $\bm\phi^{(0)} = \bm 1_M$, $\widetilde{\bm\psi}^{(0)} = \bm 1_N$, $\zeta = 1$, $\mu = 1$; \textbf{Randomly Initialize:}  $\bm \Omega^{(0)}$, $\bm r^{(0)}$.\\

        \FOR{$l$ = 1 : $max\_iter$}
        
        \STATE Update $p_i$ $\leftarrow$ ${J_i }/{\left(\sum_{i^{'} = 1}^M J_{i^{'}} \right)}$
        
        \STATE Update $\Omega_{kj}$ $\leftarrow$ ${\Omega_{kj}^*} / {\left(\sum_{j^{'}=1}^N \Omega_{kj^{'}}^*\right)}$

        \STATE Update $r_j$ $\leftarrow$ $\sum_{i=1}^M \sum_{k = 1}^K  \Omega_{kj} \Theta_{ik} p_i$

        \STATE Update $\phi_i$ $\leftarrow$ 
        ${p_i} / 
        \left({\sum_{j=1}^N e^{-\zeta D_{ij}}\widetilde{\psi_j} r_j}\right)$
      
        \STATE Update $\widetilde{\psi_j}$ $\leftarrow$ 
        ${1}/
        \left({\sum_{i=1}^M \phi_i e^{-\zeta D_{ij}}}\right)$
      
        \STATE  Solve $G(\zeta) = 0$ for $\zeta\in\mathbb{R}^+$ with Newton's method
        
        \ENDFOR
        \RETURN $C_d$
	\end{algorithmic}
\end{algorithm}
\begin{theorem}
    The proposed AM algorithm ensures the convergence of all iteration variables.
\end{theorem}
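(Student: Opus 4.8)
The plan is to prove convergence in two stages: first the monotone convergence of the objective value, and then the convergence of the iterates themselves. I would first regard the objective of the consistent maximization problem \eqref{capacity_dis} as a single function $F(\bm p, \bm \Omega, \bm r, \bm \phi, \widetilde{\bm \psi}, \zeta)$ defined over the feasible set determined by the simplex constraints $\|\bm p\|_1 = \|\bm r\|_1 = 1$, the row-stochastic constraint $\bm \Omega \mathbf{1}_N = \mathbf{1}_K$, and $\zeta \geq 0$. Each of the six update steps in Algorithm \ref{alg:main} solves a single-block subproblem exactly through the closed-form expressions derived above, so by construction every step maximizes $F$ with respect to one block while the remaining blocks are held fixed. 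Consequently, the objective is non-decreasing along the iterations, $F^{(l)} \leq F^{(l+1)}$ for all $l$.

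The second ingredient is that $F$ is bounded above. After a full sweep the dual-block updates attain the maximum in the optimal-transport dual, so the identity $g_{\mathrm{LM}}(\bm \phi, \widetilde{\bm \psi}, \zeta; P_X, P_{Z|Y}) = I_{\mathrm{LM}}(P_X, P_{Z|X})$ holds at the sweep boundary, and there the value of $F$ equals $I_{\mathrm{LM}}(P_X, P_{Z|X}) - \lambda I(Y;Z)$. Since the LM rate is dominated by the mutual information $I_{P}(X;Z) \leq \log\min\{M, N\}$ and $I(Y;Z) \geq 0$, this quantity is bounded above. A monotone sequence that is bounded above converges, hence $F^{(l)} \to F^\ast$ for some finite $F^\ast$, and the per-sweep increments $F^{(l+1)} - F^{(l)} \to 0$.

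To obtain convergence of the iterates, I would combine compactness with the vanishing increments. The primal blocks $\bm p$, $\bm \Omega$, $\bm r$ lie in compact simplices, and I would establish that the dual blocks stay in a compact set: using the closed-form updates for $\phi_i$ and $\widetilde{\psi_j}$ together with the fact that the primal components remain bounded away from zero keeps their denominators bounded below, while the monotonicity of $G(\zeta)$ confines its root to a bounded interval. Bolzano--Weierstrass then yields a convergent subsequence, and since each block map is continuous in the remaining variables and the full-sweep increment of $F$ tends to zero, any limit point must be a simultaneous block maximizer, i.e., a fixed point of the update map satisfying the first-order conditions of \eqref{capacity_dis}. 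I would upgrade this to convergence of the whole sequence by invoking the continuity and single-valuedness of the composed update map together with the strict monotonicity of $F$, in the spirit of Zangwill's global convergence theorem.

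The main obstacle is exactly this last passage from a convergent objective to convergent iterates. Because problem \eqref{capacity_def} is non-convex, multiple fixed points may coexist and the sequence could in principle drift among limit points sharing the common value $F^\ast$. Ruling this out requires each block maximizer to be unique, which is where the fixed multiplier $\lambda$ enters: an appropriate choice of $\lambda$ makes every block subproblem strictly concave -- in particular guaranteeing a unique positive root of $G(\zeta)=0$ and strictly interior primal updates -- so that the update map is single-valued and continuous and the set of limit points collapses to one point. Carefully handling the implicit definition of $\zeta$ through the root-finding step and the behavior of the iterates near the boundary of the simplex is the technical crux of the argument.
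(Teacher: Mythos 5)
Your first two paragraphs reproduce the paper's own proof almost verbatim: the authors likewise observe that the objective of \eqref{capacity_dis} is concave in each variable block separately, so that every first-order update is an exact block maximization, the objective value is therefore non-decreasing across iterations, and it is bounded above via the known upper bound on the LM rate; from these two facts they conclude that ``the algorithm converges.'' The substantive difference is that the paper stops there --- it establishes only convergence of the objective value --- whereas you correctly point out that the theorem claims convergence of \emph{all iteration variables}, which does not follow from a monotone bounded objective alone, and you sketch the additional machinery (compactness of the simplices, vanishing per-sweep increments, limit points as block-coordinate fixed points, a Zangwill-type argument with single-valued update maps) needed to close that gap. This is a genuinely more ambitious route, and it targets a real weakness in the published argument.

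That said, the extra stage is precisely where the difficulty lies, and your treatment of it remains a sketch rather than a proof. The claim that the primal iterates stay bounded away from zero --- which you need to keep the denominators in the updates of $\bm\phi$ and $\widetilde{\bm\psi}$ bounded and hence the dual blocks in a compact set --- is asserted without justification; the exponential form of $\bm J$ can drive components of $\bm p$ toward the boundary of the simplex. Likewise, the strict concavity of every block subproblem for a suitable choice of $\lambda$, and hence the single-valuedness and continuity of the composed update map, is stated as what \emph{would} be needed rather than verified; without it, the sequence could oscillate among distinct limit points sharing the value $F^\ast$, exactly as you note. So your proposal subsumes the paper's argument and honestly identifies its gap, but the iterate-convergence portion is not yet complete.
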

\begin{proof}
%
% ~ In every iteration, the variables undergo individual 
In each iteration, the variables are updated individually
\newpage
\noindent
% updates based on their first-order conditions respectively.
based on their respective first-order conditions.
It is noteworthy that the objective function in \eqref{capacity_dis} is concave  with respect to the variables $\bm{p}$, $\bm{\Omega}$, $\bm{\phi}$, $\bm{\psi}$, and $\zeta$ separately. 
Thus, the first-order condition is equivalent to identifying the maximum value of the objective function \eqref{capacity_use} at each iteration.

The value of the objective function demonstrates a sustained increase through the alternating maximization of individual variables.
Meanwhile, the objective function is the value of the LM rate between $X$ and $Z$, with its upper bound substantiated in \cite{Scarlett2020Information}. 
Consequently, the algorithm converges.
\end{proof}

\begin{remark}\label{rmk:01}
    An average power constraint like $\mathbb{E} \left[X^2\right] \leq \Gamma$ may be considered under the AWGN channel.    
    In this case, we can also design a similar Alternating Maximization algorithm based on the relaxation form \eqref{capacity_use}, following the analysis of its corresponding Lagrangian.
\end{remark}

\section{Numerical Results}  \label{sec_numerical}
This section applies the proposed AM algorithm to compute the mismatch capacity $C_d$ of the DM-IBC model.
All the experiments are conducted on a PC with 8G RAM and one Intel(R) Core(TM) Gold i5-8265U CPU @1.60GHz. 
\subsection{An Example over a Quaternary Channel}
First, we use the AM algorithm to compute \eqref{capacity_dis} over a Quaternary channel, and compare results with the MMIB algorithm of \cite{Dikshtein2023OnMO}.
The transition law is adopted the same settings with \cite[Eq.(8)]{Dikshtein2023OnMO}.
And the decoding rule $q(x,z) = e ^{-d(x,z)}$ is set to be 
%}
\begin{equation*} 
q(x,z) =
\left\{  
 \begin{array}{ll}
 1 - \epsilon, & x = z, \\  
 \epsilon / 3, & x \neq z.   
 \end{array}  
\right.  
\end{equation*} 
Here the input distribution $P_X$ is assumed to be uniform,
and this yields the computational result $R_q$ is an achievable rate. 
The index $q$ represents the decoding rule $q(x,z)$.
\begin{figure}[H]
	\centering
		\includegraphics[width=0.90\linewidth]{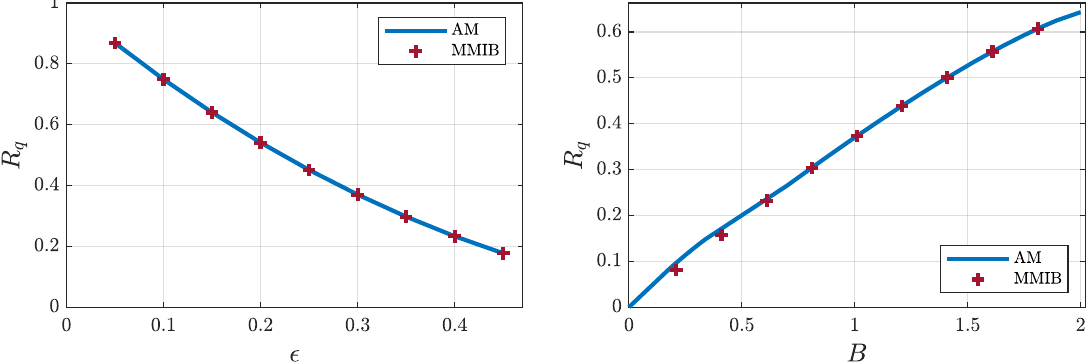}
        \caption{Left:
        $R_q$ versus channel parameter $\epsilon$ (as defined in \cite{Dikshtein2023OnMO}) with a fixed compression rate $B = 1.0$.
        Right:
        $R_q$ versus compression rate $B$ with a fixed channel parameter $\epsilon = 0.3$.}
	\label{fig:qual}
\end{figure}

The achievable rates $R_q$ for different cases of the Quaternary channel are demonstrated in Fig. \ref{fig:qual}.
In addition, Table \ref{tab:qual} compares the computational time of the AM algorithm and the MMIB algorithm. 
The algorithm terminates when two consecutive iterations yield a residual less than $10^{-5}$. 
It is obvious that the AM algorithm is superior to the MMIB algorithm in terms of the computational time. 
Further, the results
of the MMIB algorithm depend on the choice of search parameters and the initialization. 
However, these issues do not arise in the AM algorithm, which could converge stably with most random initial values.
%

% 这是一张表
\begin{table}[ht]
\centering
\begin{threeparttable}
\caption{Efficiency comparison between AM and MMIB algorithms}
\label{tab:qual}
\begin{tabular}{c|c|c|c|c|c}
    \hline
      \multirow{2}*{$(\epsilon, B)$} & \multicolumn{2}{c|}{Computational time (s)} & \multirow{2}*{\makecell{Speed-up\\ ratio}} & \multicolumn{2}{c}{Feasibility} \\
    \cline{2-3} \cline{5-6} 
      & $t_{\mathrm{MMIB}}$ & $t_{\mathrm{AM}}$ &  & MMIB & AM \\ 
    \hline 
    (0.3, 0.41) & 78 & 8.6 & 9 & 7e-9 & 8e-10 \\
    (0.3, 0.81) & 82 & 0.5 & 164 & 5e-9 & 1e-10 \\
    \hline
    (0.4, 0.41) & 118 & 1.6 & 74 & 7e-9 & 2e-10 \\
    (0.4, 0.81) & 134 & 7.5 & 18 & 6e-9 & 9e-10 \\
    \hline
\end{tabular}
\begin{tablenotes}
    \item 
    Columns 2-3 are the computational time of two algorithms. Columns 5-6 indicate whether the compression constraint is satisfied. 
\end{tablenotes}
\end{threeparttable}
\end{table}
\begin{remark}
In Table \ref{tab:qual}, $t_{\mathrm{AM}}$ denotes the total computational time, including the binary search for the corresponding multiplier $\lambda$,
since the AM algorithm requires a fixed multiplier, similar to that in the Blahut-Arimoto (BA) algorithm \cite{blahut_1972_computation}.
This does not affect the time advantage of the AM algorithm.
\end{remark}
\subsection{Examples over Gaussian Channels with IQ Imbalances}
\label{sec:num2}
With a further consideration of optimizing the input probability distribution $P_X$, applying a similar manner of the MMIB algorithm results in a more intricate search structure.
As the increasing number of discrete points in the alphabets, it may exceed the computational ability of a computer, due to the step of solving multiple roots of a nonlinear equations system.
Thus, we only adopt the AM algorithm in later experiments.
In this subsection, the channel between $X$ and $Y$ is configured as an additive white Gaussian noise (AWGN) channel with IQ imbalances,
modeled by
\begin{equation*}
Y = HX+N,\quad N\sim \mathcal{N}({\bf 0},\sigma_N^2 I).
\end{equation*}
The channel metric $H \in \mathbb{R}^{2 \times 2}$ is a combination of rotation and scaling effects as
\begin{equation*}
H = 
\begin{pmatrix}
    1 & 0\\
    0 & \eta
\end{pmatrix}
\begin{pmatrix}
    \cos\theta & \sin\theta \\
    -\sin\theta & \cos\theta
\end{pmatrix},
\end{equation*}
in which the parameter $\eta$, respectively, $\theta$ denotes the scaling and the degree of the rotation on the signal. 
Moreover, we set the signal-to-noise ratio $\mbox{SNR} \triangleq 10 \log _{10} \left({1}/{2\sigma_N^2}\right)$.
%

% decoder
Under mismatched cases, the values of parameters $\eta$ and $\theta$ are unknown at the decoder.
The decoding metric is $d(\bm x,\bm z) \triangleq \|\bm z - \hat{H}\bm x\|_2^2$, which is the distance between the expected observation points and the actual reception points.
Here $\hat{H}$ is an estimate of $H$, and $\hat{H} \triangleq I$ in mismatch cases.
This work examines the QPSK and 16QAM modulation schemes.
The constellation points (i.e., the channel input alphabet $\mathcal{X}$) of all the examined modulation schemes are normalized as in \cite{ye2022optimal}, namely $\sum_{i=1}^M\frac{1}{M}\|x_i\|^2 = 1$. 
It is important to note that, for the input alphabet $\mathcal{X}$, the uniform distribution in the power constraint is solely employed to predetermine the coordinates of the collection points. 
When computing the mismatch capacity, the input distribution $P_X$ is unknown and is determined during the iterations. 
The constellation points for the post-compressed set $\mathcal{Z}$ are configured to be the same positions as those in $\mathcal{Y}$, the points before compression.
Note that alphabets $\mathcal{Y}$ and $\mathcal{Z}$ are continuous.
% Note that the alphabet $\mathcal{Y}$ is continuous on $\mathbb{R}^2$, and the same applies to $\mathcal{Z}$.
%
We restrict it to sufficiently large region, such as $[-8, 8]\times[-8, 8]$, and discretize it by
\begin{align*}
    z_{r\sqrt{N}+s} &= (-8+r\Delta z, -8+(s-1)\Delta z),\,\,\Delta z = \frac{16}{\sqrt{N}-1}, \\
    r &=0,1,\cdots ,\sqrt{N}-1,\,\, s = 1,2,\cdots, \sqrt{N},
\end{align*}
where $\{z_j\}_{j=1}^N$ is a set of uniform grid points corresponding to the alphabet $\mathcal{Z}$.
The number of discrete points in the alphabet of $\mathcal{Y}$ and $\mathcal{Z}$ are set to $N = 2500$, unless otherwise specified.
First, we examine the convergence property of the proposed algorithm by analyzing the residual errors
\footnote{As discussed in Remark \ref{rmk:01}, $\mu$ is the dual variable corresponding to the power constraint $\mathbb{E} \left[X^2\right] \leq \Gamma$ with the setting $\Gamma = 1$. And this constraint vanishes in the QPSK modulation scheme.}
\begin{subequations} \label{Err_norm}
\begin{align}
    r_{\phi} &= \sum_{i = 1}^M \left| \phi_i \sum\limits_{j=1}^N e^{-\zeta D_{ij}}\widetilde{\psi_j} P_Z(z_j) - P_X(x_i) \right|, \\
    r_{\widetilde{\psi}} &= \sum_{j = 1}^N 
    \left|
    \left(
    \widetilde{\psi_j}\sum\limits_{i=1}^M \phi_i e^{-\zeta D_{ij}} -1
    \right)P_Z(z_j)
    \right|, \\
    r_{\zeta} & = |G(\zeta)|,\quad r_{\mu}  = |F(\mu)|.
\end{align}
\end{subequations}

Fig. \ref{fig:modres} illustrates the convergent trajectories of residual errors versus iteration steps, with the parameters $(\eta, \theta)= (0.9,{\pi}/{18})$, SNR = 10 dB, and a fixed Lagrange multiplier $\lambda = 0.25$.
It suggests that the algorithm converges to $10^{-5}$ for all cases.
\begin{figure}[H]
	\centering
	\includegraphics[width=0.90\linewidth]{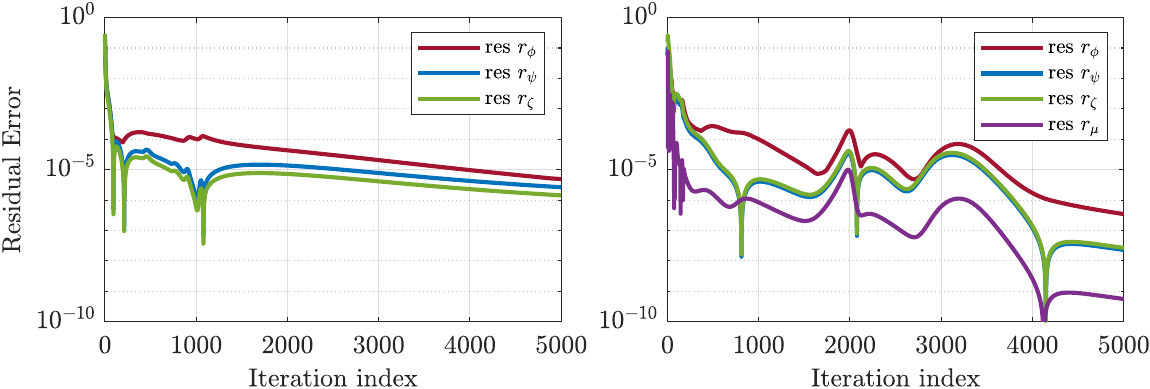}
        \caption{The convergent trajectories of the residual errors $r_{\phi}$ (Red), $r_{{\psi}}$ (Blue), $r_{\zeta}$ (Green), and $r_{\mu}$ (Purple). 
    Left: The QPSK modulation scheme.
    Right: The 16QAM modulation scheme.}
        \label{fig:modres}
\end{figure}
To facilitate an intuitive understanding of the relaying between $Y$ and $Z$, the values of the optimized conditional probability $P_{Z|Y}$ are plotted in Fig. \ref{fig:con}, with the discrete points $N = 225$ for clarity and simplify.
The vertical and horizontal coordinates correspond to the indices of the discrete points in alphabets $\mathcal{Y}$ and $\mathcal{Z}$, respectively. 
The results provide insight into the quantizer design at the relay node to some extent. 
Fig. \ref{fig:mis} compares the trade-off between the mismatch capacity $C_d$ and $I(Y;Z)$ in different settings, including a fixed parameter $(\eta, \theta) = (0.9, \pi/18)$ with different SNR values, 
and different mismatch cases
$(\eta, \theta)=$ $(0.9,{\pi}/{18})$, $(0.8,{\pi}/{18})$, $(0.9,{\pi}/{12})$, $(0.8,{\pi}/{12})$
with a fixed SNR = 10 dB.
The algorithm terminates after 5000 iterations, or when two consecutive iterations yield a residual less than $10^{-8}$.
As illustrated in Fig. \ref{fig:mis}, the channel capacity rises with increasing SNR values at the same compression rate.
The observed gap between the capacity of different mismatch cases becomes more noticeable with growing compression rates.

\begin{figure}[H]
    \centering
    \includegraphics[width=0.90\linewidth]{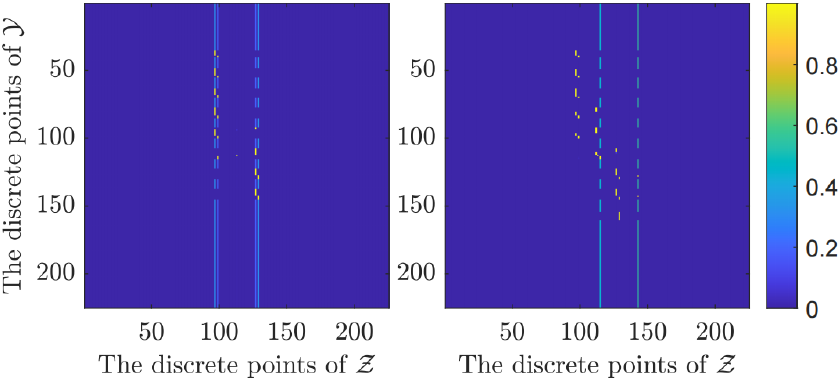}
    \caption{Values of the optimized conditional probability distribution $P_{Z|Y}$.
    Left: The QPSK modulation scheme.
    Right: The 16QAM modulation scheme.
    }
    \label{fig:con}
\end{figure}
\begin{figure}[H]
    \centering
    \includegraphics[width=0.90\linewidth]{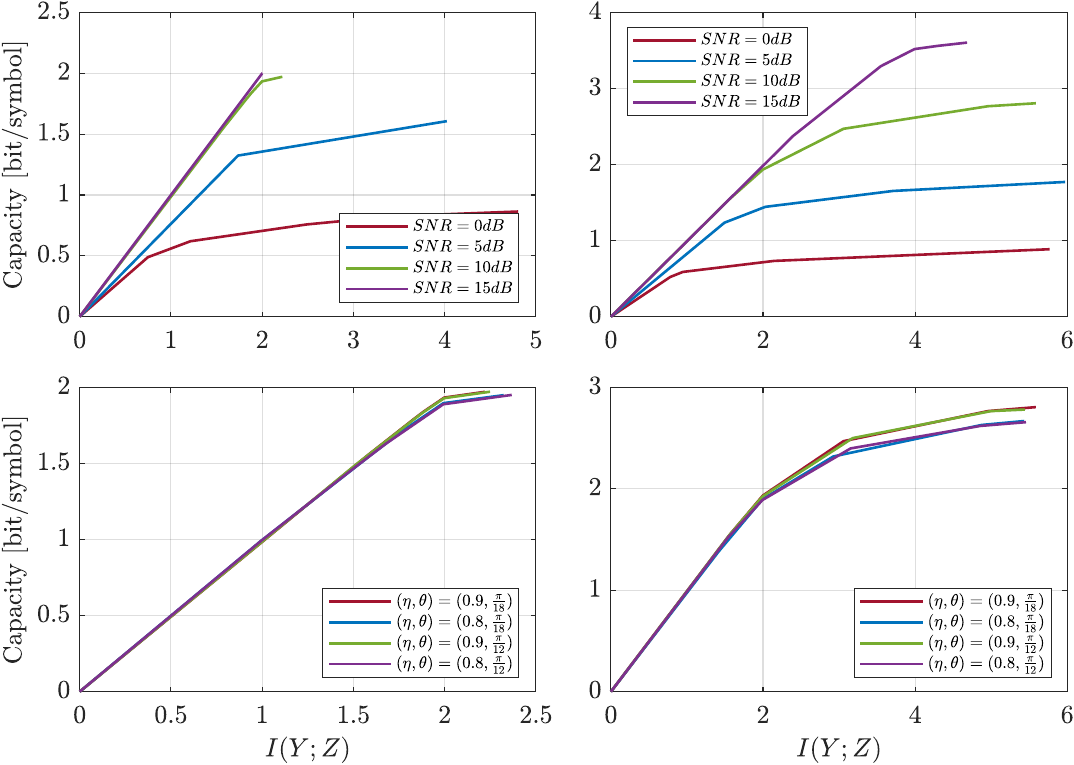}
    \caption{Mismatch capacity $C_d$ versus compression rate under different cases.
    Left: The QPSK modulation scheme.
    Right: The 16QAM modulation scheme.
    }
    \label{fig:mis}
\end{figure}%

\section{Conclusion}    \label{sec_conclusion}
This work presents an algorithm to evaluate the channel capacity of a 3-node point-to-point communication system with a relay.
Such a channel capacity takes the LM rate as the objective function, constrained by an IB-based problem, which makes the corresponding computation complicated.
To address the nonlinear constraint in the original optimization problem,
an IB-Lagrangian with a fixed multiplier is introduced.
We then proceed to examine the unconstrained relaxation form of the original optimization.
Further, a dual form of the LM rate from the perspective of the optimal transport is utilized, which enables the transformation of the problem into a consistent maximization form. 
By doing so, the alternating maximization is applicable, where each iteration step could be efficiently solved. 
Moreover, the convergence property of the proposed algorithm is proved. 
Finally, simulation results demonstrate the effectiveness of the proposed algorithm, and give promising application to practical scenarios like the quantizer design.
\newpage
\bibliographystyle{bibliography/IEEEtran}
\bibliography{bibliography/IBC_REF}
%\bibliography{references}  

\newpage
\quad
\newpage 
\begin{appendix}

The detailed derivation of the AM algorithm will be presented in the following, as mentioned in Sec. \ref{sec:alg}.

\subsection{Update the input distribution $P_X$}
% %
% \begin{enumerate}
%     \item dual form
%     \item $p^*$, $T_i$
%     \item  $\mu$, if $F(\mu)\leq 0$
% \end{enumerate}
% 

Consider the optimization problem \eqref{capacity_dis} with respect to $\bm p$ only, and recall its Lagrangian \eqref{lag_p}.
Taking the partial derivative of the Lagrangian $\mathcal{L}_p(\bm p; \eta_p)$ with respect to $\bm p$, we obtain
%
% \begin{equation}
%     \frac{\partial \mathcal{L}_p}{\partial p_i} = - 1 - \log p_i + \log J_i - \mu \|x_i\|^2 -\eta = 0.
% \end{equation}
the optimal solution 
$$\widetilde{p_i} = J_i e^{-1 - \eta_p}.$$
In conjunction with the constraints $\sum_{i=1}^M p_i = 1$,
we can update $\bm p$ according to
\begin{equation}    \label{upd_p}
    p_i = \frac{J_i }{\sum_{i^{'} = 1}^M J_{i^{'}} },  
\end{equation}
where the dual variable $\eta_p$ is vanished.

\subsection{Update the joint distribution between $Y$ and $Z$}
%
% This step entails  updating the variable $\bm t$ with the Lagrange multiplier $\lambda$.
%
First, consider the optimization problem with respect to $\Omega_{kj} = \widetilde{P}_{Z|Y}(z_j|y_k)$,
and construct its Lagrangian
\begin{equation}    \label{lag_t_a}
    \begin{aligned}
        \mathcal{L}_t (\bm \Omega; \bm \eta_{\bm t}) =
        & - \sum_{i=1}^M p_i \log p_i + \sum_{i = 1}^M p_i \log J_i(\bm\phi, \widetilde{\bm\psi}, \zeta; \bm \Omega, \bm r) \\
        & +1 
        - \sum_{k = 1}^K\eta_{t_k}\left( \sum_{j = 1}^N \Omega_{kj} - 1 \right), 
    \end{aligned}
\end{equation}
where $\{\eta_{t_k}\}_{k = 1}^K \subset \mathbb{R}$ are dual variables.
Taking the partial derivative of the Lagrangian \eqref{lag_t_a} with respect to $\Omega_{kj}$, we obtain
% %
% \begin{equation*}
% \begin{aligned}
%     & 0 = \left(\sum_{i=1}^M  \Theta_{ik} p_i\right)  \left(-\sum_{i^{'}=1}^M  \phi_{i^{'}}e^{-\zeta D_{i^{'}j}}\widetilde{\psi_j} + \log \widetilde{\psi_j}   \right)     \\
%     & - \zeta \sum_{i=1}^M  D_{ij} \Theta_{ik}p_i
%     - \lambda\left(\sum_{i=1}^M \Theta_{ik}p_i\right) \left(\log \frac{\Omega_{kj}}{r_j} + 1 \right)
%     -\eta_{t_k}.
% \end{aligned}
% \end{equation*}
% %
% Then we have 
the optimal solution
\begin{equation*}
\begin{aligned}
   &  \widetilde{\Omega_{kj}} = 
    r_j *
    \exp \left(\frac{1}{\lambda}\left(
    -\sum_{i^{}=1}^M  \phi_{i^{}}e^{-\zeta D_{i^{}j}}\widetilde{\psi_j} 
    + \log \widetilde{\psi_j}
    \right) \right)\\
    &       
    * \exp \left(
    -\frac{\zeta\sum_{i=1}^M  D_{ij} \Theta_{ik}p_i }{\lambda\sum_{i=1}^M  \Theta_{ik} p_i}\right)
    * \exp \left( 1- 
    \frac{\eta_{t_k}}{\lambda\sum_{i=1}^M  \Theta_{ik} p_i}\right)
    .&
\end{aligned}
\end{equation*}
In conjunction with the constraints $\sum_{j=1}^N \Omega_{kj} = 1$, we could update $\Omega_{kj}$ according to
\begin{equation*}
    \Omega_{kj} = \frac{\Omega_{kj}^*}{\left(\sum_{j^{'}=1}^N \Omega_{kj^{'}}^*\right)},
\end{equation*}
where $\Omega_{kj}^*$ was defined in the main text.
% \begin{equation*}
% \begin{aligned}
%     &\Omega_{kj}^* = 
%     r_j  \\
%     &*\exp \left(\frac{1}{\lambda} \left(
%     \!-\! \sum_{i^{}=1}^M  \phi_{i^{}}e^{-\zeta D_{i^{}j}}\widetilde{\psi_j} 
%     \!+\! \log \widetilde{\psi_j}
%     \!-\!\zeta\frac{\sum_{i=1}^M  D_{ij} \Theta_{ik}p_i}{\sum_{i=1}^M  \Theta_{ik} p_i}
%     \right)\right).
% \end{aligned}
% \end{equation*}

Subsequently, consider the optimization problem with respect to $r_{j}$,
and construct its Lagrangian
\begin{equation}    \label{lag_r_a}
    \begin{aligned}
        \mathcal{L}_r (\bm r; \eta_r) = 
        & - \sum_{i=1}^M p_i \log p_i + \sum_{i = 1}^M p_i \log J_i(\bm\phi, \widetilde{\bm\psi}, \zeta; \bm \Omega, \bm r) \\
        & +1 
        - \eta_{r}\left(\sum_{j = 1}^N r_{j} - 1\right),
    \end{aligned}
\end{equation}
where $\eta_{r} \in \mathbb{R}$ is a dual variable.
Taking the partial derivative of the Lagrangian \eqref{lag_r_a} with respect to $r_{j}$, we obtain the optimal solution
\begin{equation}
    \widetilde{r_j} = \frac{\lambda}{\eta_{r}} \sum_{i=1}^M\sum_{k=1}^K\Omega_{kj} \Theta_{ik}p_i.
\end{equation}
In conjunction with the constraints in \eqref{capacity_dis},  we have
\begin{equation}
\begin{aligned}
    \sum_{j= 1}^N r_j = 1,\quad 
    \sum_{i=1}^M \sum_{k = 1}^K \sum_{j= 1}^N  \Omega_{kj} \Theta_{ik} p_i = 1.
\end{aligned}
\end{equation}
Thus $\eta_{r}  = \lambda$, which yields the updating of $r_j$
\begin{equation*}
    r_j =  \sum_{i=1}^M \sum_{k = 1}^K \Omega_{kj} \Theta_{ik} p_i.
\end{equation*}

\subsection{Update the dual form of the LM rate}
% \begin{enumerate}
%     \item $\phi, \psi, \zeta$
%     \item if $G(\zeta) \leq 0$
% \end{enumerate}

Similar to the previous work \cite{ye2022optimal}, we analyze the first-order condition of the variables $\bm \phi$, $\widetilde{\bm \psi}$, $\zeta$ respectively.
One could update the dual variables alternatively according to the following equations
\begin{equation*}
    \phi_i  = \dfrac{p_i}
    {\sum_{j=1}^N e^{-\zeta D_{ij}}\widetilde{\psi_j} r_j },\quad % \label{upd_phi}\\
    \widetilde{\psi_j}  = \dfrac{1}
    {\sum_{i^{'}=1}^M \phi_{i^{'}} e^{-\zeta D_{i^{'}j}}}.
    % \label{upd_psi}
\end{equation*}

Also, the variable $\zeta$ could be updated by finding the root of a monotone function $G(\zeta)$ with Newton's Method, as defined in \eqref{upd_ze}.
%\begin{footnotesize}
% \begin{equation*}\label{upd_ze_a}
% \begin{aligned}
%     G(\zeta) \triangleq 
%     \sum_{i^{'}=1}^M \sum_{j=1}^N \phi_{i^{'}} D_{i^{'}j} e^{-\zeta D_{i^{'}j}}\widetilde{\psi_j} \left(\sum_{i=1}^M \sum_{k = 1}^K  \Omega_{kj} \Theta_{ik} p_i\right) & \\
%     - \sum_{i=1}^M \sum_{j=1}^N\sum_{k=1}^K  D_{ij} {\Omega_{kj}\Theta_{ik}p_i}.&
% \end{aligned}
% \end{equation*}
%\end{footnotesize}
%
Note that $G(0) \leq 0$ indicates that the corresponding constraint is satisfied. Thus, we can simply set $\zeta = 0$ without finding its root when $G(0) \leq 0$.
%

% \subsection{Residual errors}
% Here we give the analytic form of the residual errors referenced in Sec. \ref{sec:num2}.
% \begin{subequations} \label{Err_norm}
% \begin{align}
%     r_{\phi} &= \sum_{i = 1}^M \left| \phi_i \sum\limits_{j=1}^N e^{-\zeta D_{ij}}\widetilde{\psi_j} r_j - p_i \right|, \\
%     r_{\widetilde{\psi}} &= \sum_{j = 1}^N 
%     \left|
%     \left(
%     \widetilde{\psi_j}\sum\limits_{i=1}^M \phi_i e^{-\zeta D_{ij}} -1
%     \right)r_j
%     \right|, \\
%     r_{\zeta} & = |G(\zeta)|,\quad r_{\mu}  = |F(\mu)|.
% \end{align}
% \end{subequations}
% %

%0508：解释实验中16QAM的设置问题...
\subsection{AM algorithm with the power constraint}
% Different from the main text, the appendix will present the derivation of the Alternating Maximization algorithm with considering the average power constraint $\mathbb{E} \left[X^2\right] \leq \Gamma$.
% A power constraint on the input distribution is always considered to avoid the potential for unbounded capacity in most continuous-alphabet channels.
% %
% Similarly, 
An average power constraint like $\mathbb{E} \left[X^2\right] \leq \Gamma$ may be considered under the AWGN channel.
%, which is adopted to avoid the potential for unbounded capacity in most continuous-alphabet channels.   
%
Accordingly, the optimization problem is turned to be \eqref{capacity_dis_a}, instead of \eqref{capacity_dis} in the main text.
\begin{equation}\label{capacity_dis_a}
\begin{aligned}
    \max_{\bm p, \bm t, \bm r}\,\,\max_{\substack{ \bdphi, \widetilde{\bdpsi}, \\ \zeta \geq 0}}& \,\,  - \bm p^T \log \bm p + \bm p^T \log \bm J(\bdphi, \widetilde{\bdpsi}, \zeta; \bm \Omega, \bm r) + 1 \\
    \mathrm{s.t.} &\,\,  \bm \Omega \mathbf{1}_{N} = \mathbf{1}_{K},\,\,
    \|\bm r\|_1 = 1
    % \bm r = \bm t ^T \bm s ^T \bm p
    ,\,  \|\bm p\|_1 = 1,
    \mathbb{E}_{\bm p} \|\bm x\| \leq \Gamma.
\end{aligned}
\end{equation}
To solve this problem, the alternating ascend technique is similarly adopted.
Actually, the difference only appears in the process of updating the input distribution $\bm p$.
The main derivation of the AM algorithm with the power constraint is given below.

Consider the optimization problem \eqref{capacity_dis_a} with respect to $\bm p$ only, and construct its Lagrangian
\begin{equation}    \label{lag_p_a}
    \begin{aligned}
        \mathcal{L}_p  (\bm p; \mu, \eta_p) = 
        - \sum_{i=1}^M p_i \log p_i + \sum_{i = 1}^M p_i \log J_i(\bm\phi, \widetilde{\bm\psi}, \zeta; \bm \Omega, \bm r) &\\
        +1 - \mu\left( \sum_{i = 1}^M p_i \|x_i\|^2 - \Gamma\right)
        - \eta_p \left( \sum_{i = 1}^M p_i - 1 \right),&
    \end{aligned}
\end{equation}
where $\mu \in \mathbb{R}^+$ and $\eta_p \in \mathbb{R}$ are dual variables.
% , and
% $J_i(\bm\phi, \widetilde{\bm\psi}, \zeta; \bm \Omega, \bm r)$ denotes the sum of the coefficients of the linear terms related to the variable $p_i$ in the objective function \eqref{capacity_dis_a}
% %
% %
% \begin{equation*}
% \begin{aligned}
%         J_i  =  
%         \phi_i \exp \left( 
%         \sum_{j=1}^N\sum_{k = 1}^K (\Omega_{kj} \Theta_{ik}) 
%         \left(
%         - \sum_{i^{'}=1}^M  \phi_{i^{'}}e^{-\zeta D_{i^{'}j}}\widetilde{\psi_j}
%         \right.
%         \right.&
%         \\
%         \left.\left.
%         + \log\widetilde{\psi_j}
%         - \zeta  D_{ij} -\lambda \left(\log \Omega_{kj} - \log r_j \right) \right)
%         \right),&
% \end{aligned}
% \end{equation*}
% %
%

Taking the partial derivative of the Lagrangian $\mathcal{L}_p(\bm p; \mu, \eta_p)$ with respect to $\bm p$, we obtain
%
% \begin{equation}
%     \frac{\partial \mathcal{L}_p}{\partial p_i} = - 1 - \log p_i + \log J_i - \mu \|x_i\|^2 -\eta = 0.
% \end{equation}
the optimal solution 
$$\widetilde{p_i} = J_i e^{- \lambda \|x_i\|^2} e^{-1 - \eta_p}.$$
Then the dual form of \eqref{capacity_dis_a} with respect to $\bm p$ is constructed as
\begin{equation} \label{eq:dual21}
    \max_{ \lambda \geq 0, \,\, \eta_p} \,\,
    % f_0(\lambda,\eta) \triangleq
    - e^{-1-\eta_p}
    \left(\sum_{i = 1}^M J_i e^{- \lambda \|x_i\|^2}\right)
    -1
    - \lambda\Gamma - \eta_p.
\end{equation}

By analyzing the dual optimization problem \eqref{eq:dual21}, we can update $\bm p$ according to
\begin{equation}    \label{upd_p_a}
    p_i = \frac{J_i e^{- \mu \|x_i\|^2}}{\sum_{i^{'} = 1}^M J_{i^{'}} e^{- \mu \|x_{i^{'}}\|^2}},  
\end{equation}
where the dual variable $\eta_p$ is vanished, and where $\mu \in \mathbb{R}^+$ is updated via finding the root of the following one-dimensional monotonic function
\begin{equation}    %\label{upd_mu}
    F(\mu) \triangleq -\Gamma + 
    \frac{\sum_{i = 1}^M \|x_i\|^2 J_i e^{- \mu \|x_i\|^2}}
    {\sum_{i = 1}^M J_i e^{- \mu \|x_i\|^2} }.
\end{equation}
When $F(0) \leq 0$, one could simply set $\mu = 0$ without finding its root, since the corresponding constraint has already been satisfied.

The AM algorithm with power constraint is summarized in the pseudo-code in Algorithm \ref{alg:main_2}.
\begin{algorithm}[htbp] 
	\renewcommand{\algorithmicrequire}{\textbf{Input:}}
	\renewcommand\algorithmicensure {\textbf{Output:} }
	\caption{AM Algorithm with power constraint}
	\label{alg:main_2}
	
	\begin{algorithmic}[1]
		\REQUIRE %~~
		Decoding metric $D_{ij}$, Conditional probability distribution $\Theta_{ik}$;  Power constraint coefficient $\Gamma$; The Lagrange multiplier $\lambda$; Iteration number $max\_iter$. \\
         
        \STATE \textbf{Initialize:} $\bm\phi^{(0)} = \bm 1_M$, $\widetilde{\bm\psi}^{(0)} = \bm 1_N$, $\zeta = 1$, $\mu = 1$; \textbf{Randomly Initialize:}  $\bm \Omega^{(0)}$, $\bm r^{(0)}$.\\

        \FOR{$l$ = 1 : $max\_iter$}
        
        \STATE  Solve $F(\mu) = 0$ for $\mu\in\mathbb{R}^+$ with Newton's method
        \STATE Update $p_i^{(l)}$ $\leftarrow$ ${J_i e^{- \mu \|x_i\|^2} }/{\left(\sum_{i^{'} = 1}^M J_{i^{'}} e^{- \mu \|x_{i^{'}}\|^2}\right)}$
        
        \STATE Update $\Omega_{kj}^{(l)}$ $\leftarrow$ ${\Omega_{kj}^*} / {\left(\sum_{j^{'}=1}^N \Omega_{kj^{'}}^*\right)}$

        \STATE Update $r_j^{(l)}$ $\leftarrow$ $\sum_{i=1}^M \sum_{k = 1}^K  \Omega_{kj}^{(l)} \Theta_{ik} p_i^{(l)}$

        \STATE Update $\phi_i^{(l)}$ $\leftarrow$ 
        ${p_i}^{(l)} / 
        \left({\sum_{j=1}^N e^{-\zeta D_{ij}}\widetilde{\psi_j}^{(l-1)} r_j^{(l)}}\right)$
      
        \STATE Update $\widetilde{\psi_j}^{(l)}$ $\leftarrow$ 
        ${1}/
        \left({\sum_{i=1}^M \phi_i^{(l)} e^{-\zeta D_{ij}}}\right)$
      
        \STATE  Solve $G(\zeta) = 0$ for $\zeta\in\mathbb{R}^+$ with Newton's method
        
        \ENDFOR
        \RETURN $C_d$
	\end{algorithmic}
\end{algorithm}

In the numerical simulations of Sec. \ref{sec:num2}, a power constraint $\mathbb{E} \left[X^2\right] \leq \Gamma$ is taken into consideration.
And we set $\Gamma = 1$ according to the 
predetermined coordinate positions of the collection points.
Particularly, in the QPSK modulation scheme, 
each collection point shares the same distance $\|\bm x\| = 1$ according to the initialization.
Thus, the power constraint here is equivalent to $\sum_{i=1}^N p_i \leq 1$, which always holds by the property of the probability distribution.

\end{appendix}

\end{document}